\newif\ifdraft\drafttrue
\newcommand\os[1]{\todo[inline,size=\scriptsize,backgroundcolor=PaleTurquoise]{#1 - \textbf{Olivier}}}
\newcommand\dc[1]{\todo[inline,size=\scriptsize,backgroundcolor=Yellow]{#1 - \textbf{Didier}}}
\newcommand\ac[1]{\todo[inline,size=\scriptsize,backgroundcolor=SpringGreen]{#1 - \textbf{Arnaud}}}
\newcommand\vlong[1]{\todo[inline,size=\scriptsize,backgroundcolor=red, caption={2do}]{
\begin{minipage}{\textwidth-4pt}#1  - \textbf{Garder dans la version longue}\end{minipage}}}
\newcommand\os[1]{}
\newcommand\ac[1]{}
\newcommand\dc[1]{}
\newcommand\review[1]{}
\newcommand\vlong[1]{}
\newcommand{\ie}{\emph{i.e.}\xspace}
\renewcommand{\epsilon}{\varepsilon}
\renewcommand{\phi}{\varphi}
\newtheorem{theorem}{Theorem}
\newtheorem{example}{Example}
\newtheorem{corollary}{Corollary}
\newcommand\textbfit[1]{\textbf{\em #1}}
\newcommand{\defin}[1]{\textbfit{\boldmath #1}}
\newcommand{\eg}{\emph{e.g.}\xspace}
\newcommand{\resp}{resp.\xspace} 
\newcommand{\fl}[1]{\overset{#1}{\longrightarrow}}
\newcommand{\qini}{q_{in}}
\newcommand{\Qfinal}{Q_{fin}}
\newcommand{\qfinal}{q_{fin}}
\newcommand{\qs}{q_{\sharp}}
\newcommand{\wG}{\widetilde{G}}
\begin{document}


\title{Marking Shortest Paths On Pushdown Graphs Does Not Preserve MSO Decidability}

\author[1]{Arnaud Carayol\thanks{{Arnaud.Carayol@univ-mlv.fr}}}
\author[2]{Olivier Serre\thanks{{Olivier.Serre@cnrs.fr}}}
\affil[1]{LIGM (CNRS \& Université Paris Est)}
\affil[2]{IRIF (CNRS \& Université Paris Diderot -- Paris 7)}
\date{}


\maketitle

\begin{abstract}
In this paper we consider pushdown graphs, \ie infinite graphs that can be described as transition graphs of deterministic real-time pushdown automata. We consider the case where some vertices are designated as being final and we build, in a breadth-first manner, a marking of edges that lead to such vertices ({\ie, for every vertex that can reach a final one, we mark all out-going edges laying on some shortest path to a final vertex}).

Our main result is that the edge-marked version of a pushdown graph may itself no longer be a pushdown graph, 
{as we prove that the MSO theory of this enriched graph may be undecidable}.
\end{abstract}
\newpage

\section{Introduction}

The original motivation of this paper comes from the following work plan: design algorithms working on \emph{infinite} graphs and computing classical objects from graph theory. 
One {firstly} targeted set of algorithms are naturally those computing spanning trees, \emph{e.g.} spanning trees built by performing a breadth-first search or the ones built by performing a depth-first search. In particular, breadth-first search seems to be a good candidate as it can  easily be defined by a smallest fixpoint computation. 

Of course, to ensure termination one has to identify reasonable classes of infinite graphs: obviously such a class should provide finite description of its elements and the graphs should have some good decidability properties. {The simplest such class is the class of} transition graphs of pushdown automata: they are finitely described by the underlying pushdown automata and they enjoy many good properties, in particular with respect to logic and games (see \eg~\cite{MullerS85,Walukiewicz01}). {In particular, monadic second-order logic (MSO) is decidable for any pushdown graph.}

Another expected property of our algorithm is that it should be reflective\footnote{In programming languages, reflection is the process by which a computer program can observe and dynamically modify its own structure and behaviour. See~\cite{BCOS10} for an example of reflection in the richer setting of collapsible pushdown automata and recursion schemes.} in the following sense: the produced outputs should belong to the same class of structures {as} the inputs. Equivalently, in the setting of pushdown graphs, it means that we want to design an algorithm that takes as an input a pushdown graph and produces as an output another pushdown graph that is {an isomorphic copy of} the input graph enriched with a marking of some edges that corresponds to a breadth-first search spanning tree.

The main result of this paper is that such an algorithm does not exist, \ie there is no algorithm that takes as an input a pushdown graphs and returns a copy of it marked with a breadth-first search spanning tree. The roadmap to prove this result is to exhibit a pushdown graph such that when marked with a breadth-first search spanning tree leads to a graph with an undecidable MSO theory: as pushdown graphs enjoy decidable MSO theories it directly permits to conclude.

The paper starts by introducing in Section~\ref{sec:def} the classical objects and formally defines the problem under study. Our main results are proven in Section~\ref{sec:main} while we briefly discuss some consequences in Section~\ref{ref:consequences}.


\section{Preliminaries}\label{sec:def}

An \defin{alphabet} $A$ is a finite set of letters. In the sequel $A^*$ denotes the set of finite words over $A$
and the \defin{empty word} is written $\epsilon$. The length of a word $u$ is denoted by $|u|$ and for any $k\geq 0$, we let 
$A^{\leq k}=\{u\mid |u|\leq k\}$.
 Let $u$ and $v$ be two finite words. Then $u\cdot v$ (or simply $uv$) denotes the \defin{concatenation} of $u$ and $v$.

Let $A$ be an alphabet. An $A$-labeled (oriented) \defin{graph} is a {pair} $G=(V,E)$ where $V$ is a (possibly infinite) set of vertices and $E\subseteq V\times A\times V$ is a (possibly infinite) set of edges. In the sequel we write $v\fl{a} v'$ to denote that $(v,a,v')\in E$.

{A vertex $v'$ is \defin{reachable} from a vertex $v$ if there is a sequence $v_1,\dots,v_\ell$ of vertices together with a sequence of letters $a_1,\dots,a_{\ell-1}$ such that $v_1=v$, $v_\ell=v'$ and $v_i\fl{a_i}v_{i+1}$ for every $i=1,\dots,\ell-1$.}

\subsection{Pushdown Graphs}\label{ssec:pushdown}

A  \defin{deterministic real-time pushdown automaton} is defined as a tuple $\mathcal{P}=(Q,A,\Gamma,\bot,\qini,\Qfinal,\delta)$ where $Q$ is a finite set of control states, $A$ is a finite input alphabet, $\Gamma$ is a finite stack alphabet, $\bot\in \Gamma$ is a bottom-of-stack symbol, $\qini\in Q$ is an initial state, $\Qfinal\subseteq Q$ is a set of final states and $\delta: Q\times \Gamma\times A\rightarrow Q\times \Gamma^{\leq 2}$ is a \emph{partial} transition function such that
\begin{itemize}
\item $\delta(q,\bot,a)=(q',u) \Rightarrow u\in (\Gamma\setminus\{\bot\})\bot\cup \{\bot\}$, \emph{i.e.} $\bot$ cannot be removed.
\item $\delta(q,\gamma,a) = (q',u)$ and $\gamma\neq \bot \Rightarrow u\in (\Gamma\setminus\{\bot\})^{\leq 2}$, \emph{i.e.}  $\bot$ cannot be pushed. 
\end{itemize}

A \defin{configuration} of $\mathcal{P}$ is a pair $(q,\sigma)\in Q\times (\Gamma\setminus\{\bot\})^*\bot$ consisting of a control state and a well-formed stack content. The \defin{initial configuration} of $\mathcal{P}$ is $(\qini,\bot)$ and the \defin{final configurations} of $\mathcal{P}$ are those of the form $(\qfinal,\bot)$ with $\qfinal\in \Qfinal$.

Let $(q,\sigma)$ and $(q',\sigma')$ be two configurations, and let $a\in A$ be a letter. Then, there is an \defin{$a$-labelled transition} from $(q,\sigma)$ to $(q',\sigma')$, denoted $(q,\sigma)\fl{a}(q',\sigma')$, if and only if one has $\delta(q,\gamma,a)=(q',u)$ where $\sigma = \gamma\sigma''$ and $\sigma'=u\sigma''$, \ie $\sigma'$ is obtained from $\sigma$ by replacing its top symbol $\gamma$ by $u$.

The \defin{configuration graph} of $\mathcal{P}$ is the $A$-labeled graph $G_{\mathcal{P}} = (V_{\mathcal{P}},E_{\mathcal{P}})$ where $V_{\mathcal{P}}$ is the set of configurations of $\mathcal{P}$ and where $E_{\mathcal{P}}$ is the transition {relation} defined by $\mathcal{P}$. {A graph isomorphic to a graph $G_{\mathcal{P}}$ is called a \defin{pushdown graph}.}

\begin{example}\label{ex:pda}
As a running example, consider the following pushdown automaton $\mathcal{P}=(Q,A,\Gamma,\bot,\qini,\{\qfinal\},\delta)$ where one lets $Q=\{\qini,\qfinal,\qs\}$, $A=\{a,b,\sharp\}$, $\Gamma=\{a,b,\bot\}$ and $\delta$ be as follows:
\begin{itemize}
\item $\delta(\qini,\gamma,a) = (\qini,a\gamma)$ and $\delta(\qini,\gamma,b) = (\qini,b\gamma)$: in the initial state on reading a symbol in $\{a,b\}$ it is copied on top of the stack.
\item $\delta(\qini,\gamma,\sharp) = (\qs,\gamma)$: in the initial state on reading symbol $\sharp$ the state is switched to $\qs$.
\item For $x\in\{a,b\}$ and $\gamma\neq \bot$, $\delta(\qs,\gamma,x) = (\qs,\epsilon)$ if $\gamma=x$ and $\delta(\qs,\gamma,x) = (\qs,x\gamma)$ if $\gamma\neq x$; and $\delta(\qs,\gamma,\sharp) = (\qs,\gamma)$: in the state $\qs$ an input letter $\sharp$ does not change the configuration while for an input letter in $\{a,b\}$ the top symbol is popped if it is the same as the input symbol otherwise the letter is copied on top of the stack.
\item $\delta(\qs,\bot,x) = (\qfinal,{\bot})$ for any $x\in A$: once the stack is emptied in the state $\qs$ one goes to the state $\qfinal$.
\item $\delta(\qfinal,\bot,x) = (\qfinal,{\bot})$ for any $x\in A$: once the configuration $(\qfinal,\bot)$ is reached it stays in forever.
\end{itemize}
The graph $G_{\mathcal{P}}$ is depicted in Figure~\ref{fig:example}.
\end{example}

\begin{figure}
\centering
\begin{tikzpicture}[scale=.9,transform shape]
\node (qsbot) at (1,1.5) {\small$(\qs,\bot)$};
\node (qfbot) at (1,3) {\small$(\qfinal,\bot)$};
\node (qibot) at (1,0) {\small$(\qini,\bot)$};
\node (qia) at (-1.5,-1.5) {\small$(\qini,a\bot)$};
\node (qib) at (3.5,-1.5) {\small$(\qini,b\bot)$};
\path[->] (qibot) edge node[above left] {\small $a$} (qia);
\path[->] (qibot) edge node[above right] {\small $b$} (qib);
\path[->] (qibot) edge node[right] {\small $\sharp$} (qsbot);
\node (qiaa) at (-2.5,-3) {\small$(\qini,aa\bot)$};
\node (qiba) at (-.5,-3) {\small$(\qini,ba\bot)$};
\node (qiab) at (2.5,-3) {\small$(\qini,ab\bot)$};
\node (qibb) at (4.5,-3) {\small$(\qini,bb\bot)$};
\node (qsa) at (-4,-1.5) {\small$(\qs,a\bot)$};
\node (qsba) at (-5,-3) {\small$(\qs,ba\bot)$};
\node (qsb) at (6,-1.5) {\small$(\qs,b\bot)$};
\node (qsab) at (7,-3) {\small$(\qs,ab\bot)$};
\node at (-2.5,-3.3) {\small \vdots};
\node at (-.5,-3.3) {\small \vdots};
\node at (-5,-3.3) {\small \vdots};
\node at (2.5,-3.3) {\small \vdots};
\node at (4.5,-3.3) {\small \vdots};
\node at (7,-3.3) {\small \vdots};
\path[->] (qia) edge node[left] {\small $a$} (qiaa);
\path[->] (qia) edge node[right] {\small $b$} (qiba);
\path[->] (qia) edge node[above] {\small $\sharp$} (qsa);
\path[->] (qib) edge node[above] {\small $\sharp$} (qsb);
\path[->] (qsa) edge[bend left] node[above left] {\small $a$} (qsbot);
\path[->] (qsb) edge[bend right] node[above right] {\small $b$} (qsbot);
\path[->] (qib) edge node[left] {\small $a$} (qiab);
\path[->] (qib) edge node[right] {\small $b$} (qibb);
\path[->] (qsbot) edge node[right] {\small $a,b,\sharp$} (qfbot);
\path[->] (qfbot) edge [loop right] node {\small $a,b,\sharp$} (qfbot);
\path[->] (qsa) edge [loop left] node {\small $\sharp$} (qsa);
\path[->] (qsb) edge [loop right] node {\small $\sharp$} (qsb);
\path[->] (qsa) edge node[left] {\small $b$} (qsba);
\path[->] (qsba) edge[bend right] node[right] {\small $b$} (qsa);
\path[->] (qiba) edge[bend left] node[below] {\small $\sharp$} (qsba);
\path[->] (qsab) edge[bend left] node[left] {\small $a$} (qsb);
\path[->] (qsb) edge node[right] {\small $a$} (qsab);
\path[->] (qiab) edge[bend right] node[below] {\small $\sharp$} (qsab);
\end{tikzpicture}
\caption{The configuration graph $G_{\mathcal{P}}$ of the pushdown automaton of Example~\ref{ex:pda}}\label{fig:example}
\end{figure}

We are interested in defining, in a breadth-first search manner, the set of configurations from which one can reach a final configuration. For this consider the following increasing sequence $(W_i)_{i\geq 0}$ of configurations of $\mathcal{P}$ and call its limit $W$.
\begin{itemize}
\item $W_0 = \{(\qfinal,\bot)\mid \qfinal\in\Qfinal\}$ consists only of the final configurations.
\item $W_{i+1} = W_i\cup \{(q,\sigma)\mid \exists (q',\sigma')\in W_i\text{ and }a\in A \text{ s.t. }(q,\sigma)\fl{a}(q',\sigma')\}$.
\end{itemize}
Obviously, $W$ is the set of all configurations from which a final configuration is reachable. Define for every configuration $(q,\sigma)$ its rank $rk((q,\sigma))$ to be the smallest $i$ such that $(q,\sigma)\in W_i$ when exists and to be $\infty$ otherwise.

We now define a new graph $\wG_{\mathcal{P}}$ obtained from $G_{\mathcal{P}}$ by marking those edges that go from a configuration to one with a strictly smaller rank (equivalently that decrease the rank by $1$). First we let $\widetilde{A}=A\cup\{\underline{a}\mid a \in A\}$ consists of $A$ together with a marked copy of each of its elements. Then we let $\wG_{\mathcal{P}}$ be the $\widetilde{A}$-labelled graph $(V_\mathcal{P},\widetilde{E}_\mathcal{P})$ where
\begin{itemize}
\item $((q,\sigma),a,(q',\sigma'))\in \widetilde{E}_\mathcal{P}$ if $((q,\sigma),a,(q',\sigma'))\in E_\mathcal{P}$ and $rk((q',\sigma'))\geq rk((q,\sigma))$;
\item $((q,\sigma),\underline{a},(q',\sigma'))\in \widetilde{E}_\mathcal{P}$ if $((q,\sigma),a,(q',\sigma'))\in E_\mathcal{P}$ and $rk((q',\sigma'))< rk((q,\sigma))$.
\end{itemize}

Coming back to Example~\ref{ex:pda}, the graph $\wG_{\mathcal{P}}$ is depicted in Figure~\ref{fig:example-continue}.

\begin{figure}
\centering
\begin{tikzpicture}[select/.style={very thick},unselect/.style={dotted},rouge/.style={red},scale=.9,transform shape]
\node (qsbot) at (1,1.5) {\small$(\qs,\bot)$};
\node (qfbot) at (1,3) {\small$(\qfinal,\bot)$};
\node (qibot) at (1,0) {\small$(\qini,\bot)$};
\node (qia) at (-1.5,-1.5) {\small$(\qini,a\bot)$};
\node (qib) at (3.5,-1.5) {\small$(\qini,b\bot)$};
\path[->,unselect] (qibot) edge node[above left] {\small $a$} (qia);
\path[->,unselect] (qibot) edge node[above right] {\small $b$} (qib);
\path[->,select] (qibot) edge node[right] {\small $\sharp$} (qsbot);
\node (qiaa) at (-2.5,-3) {\small$(\qini,aa\bot)$};
\node (qiba) at (-.5,-3) {\small$(\qini,ba\bot)$};
\node (qiab) at (2.5,-3) {\small$(\qini,ab\bot)$};
\node (qibb) at (4.5,-3) {\small$(\qini,bb\bot)$};
\node (qsa) at (-4,-1.5) {\small$(\qs,a\bot)$};
\node (qsba) at (-5,-3) {\small$(\qs,ba\bot)$};
\node (qsb) at (6,-1.5) {\small$(\qs,b\bot)$};
\node (qsab) at (7,-3) {\small$(\qs,ab\bot)$};
\node at (-2.5,-3.3) {\small \vdots};
\node at (-.5,-3.3) {\small \vdots};
\node at (-5,-3.3) {\small \vdots};
\node at (2.5,-3.3) {\small \vdots};
\node at (4.5,-3.3) {\small \vdots};
\node at (7,-3.3) {\small \vdots};
\path[->,unselect] (qia) edge node[left] {\small $a$} (qiaa);
\path[->,unselect] (qia) edge node[right] {\small $b$} (qiba);
\path[->,select] (qia) edge node[above] {\small $\underline{\sharp}$} (qsa);
\path[->,select] (qib) edge node[above] {\small $\underline{\sharp}$} (qsb);
\path[->,select] (qsa) edge[bend left] node[above left] {\small $\underline{a}$} (qsbot);
\path[->,select] (qsb) edge[bend right] node[above right] {\small $\underline{b}$} (qsbot);
\path[->,unselect] (qib) edge node[left] {\small $a$} (qiab);
\path[->,unselect] (qib) edge node[right] {\small $b$} (qibb);
\path[->,select] (qsbot) edge node[right] {\small $\underline{a},\underline{b},\underline{\sharp}$} (qfbot);
\path[->,unselect] (qfbot) edge [loop right] node {\small $a,b,\sharp$} (qfbot);
\path[->,unselect] (qsa) edge [loop left] node {\small $\sharp$} (qsa);
\path[->,unselect] (qsb) edge [loop right] node {\small $\sharp$} (qsb);
\path[->,unselect] (qsa) edge node[left] {\small $b$} (qsba);
\path[->,select] (qsba) edge[bend right] node[right] {\small $\underline{b}$} (qsa);
\path[->,select] (qiba) edge[bend left] node[below] {\small $\underline{\sharp}$} (qsba);
\path[->,select] (qsab) edge[bend left] node[left] {\small $\underline{a}$} (qsb);
\path[->,unselect] (qsb) edge node[right] {\small $a$} (qsab);
\path[->,select] (qiab) edge[bend right] node[below] {\small $\underline{\sharp}$} (qsab);

\end{tikzpicture}
\caption{The graph $\wG_{\mathcal{P}}$ of the pushdown automaton of Example~\ref{ex:pda}}\label{fig:example-continue}
\end{figure}

Finally, one can consider a graph built out of $G_{\mathcal{P}}$ by marking only some (but at least one) shortest paths to a final configuration (the extreme case being when the marked paths form a spanning tree). More precisely, a \defin{well-formed marking} of $G_{\mathcal{P}}$ is an $\widetilde{A}$-labelled graph $G=(V_{\mathcal{P}},E)$ such that:
\begin{itemize}
{\item For every edge $((q,\sigma),\underline{a},(q',\sigma'))\in E_{\mathcal{P}}$, one has either $((q,\sigma),a,(q',\sigma'))\in E$ or $((q,\sigma),\underline{a},(q',\sigma'))\in E$.}
{\item For every edge $((q,\sigma),a,(q',\sigma'))\in E_{\mathcal{P}}$, one has  $((q,\sigma),a,(q',\sigma'))\in E$.}
{\item For every edge $((q,\sigma),a,(q',\sigma'))\in E$, one has either $((q,\sigma),a,(q',\sigma'))\in E_{\mathcal{P}}$ or $((q,\sigma),\underline{a},(q',\sigma'))\in E_{\mathcal{P}}$.
}
{\item For every edge $((q,\sigma),\underline{a},(q',\sigma'))\in E$, one has $((q,\sigma),\underline{a},(q',\sigma'))\in E_{\mathcal{P}}$.
}
\item For every configuration $(q,\sigma)\in W$ one has at least one edge of the form $((q,\sigma),\underline{a},(q',\sigma'))\in E$.
\end{itemize}

\subsection{Monadic Second Order Logic}

\defin{Monadic Second Order Logic (MSO)} is a classical logical formalism {(extending First Order Logic)} to express properties of a relational structure. In this framework, formulas are built from atomic formulas using Boolean connectives (negation, disjunction, conjunction) and quantifiers. Atomic formulas in first-order logic have either the form $x_1=x_2$ or $x_1\fl{a}x_2$ where $x_1$ and $x_2$ are first-order variables (that each stands for an element in the domain, \ie for a vertex); in monadic second-order logic there are additional atomic formulas of the form $x\in X$ where $x$ is a first-order variable and $X$ is a second-order variable (that stands for a subset of the domain, \ie for a subset of vertices). First-order (\resp second-order) variables are introduced thanks to existential and universal quantifications. 

Whether a formula (without free variable) holds in a structure is defined as usual, and to keep this article short we refer the reader \eg to \cite{Thomas97} for examples and formal definitions. When a formula has free variables it permits to express whether a structure together with an assignation of the free variables satisfy the formula: in particular if one has a single {first-order} free variable, it permits to express a property of an element inside a structure. 

We say that a structure has a \defin{decidable MSO theory} in case the following problem is decidable: does the input formula $\phi$ hold in the structure? 

The following is a classical result due to Muller and Schupp \cite{MullerS85}

\begin{theorem}\label{theo:MSOpushdown}
For any pushdown automaton $\mathcal{P}$ the graph $G_{\mathcal{P}}$ has a decidable MSO theory.
\end{theorem}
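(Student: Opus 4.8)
The plan is to reduce the statement to Rabin's theorem on the decidability of the MSO theory of the infinite binary tree, by exhibiting an MSO-interpretation of $G_{\mathcal{P}}$ inside a regular tree. Recall that MSO-interpretations — allowing a fixed finite number of copies of the domain — transport decidability of the MSO theory from the source structure to the target structure; hence it suffices to interpret $G_{\mathcal{P}}$ in a structure already known to have a decidable MSO theory. (This is the modern route; Muller and Schupp's original argument instead characterises these graphs through their finitely many end-types, but the interpretation approach is cleaner to sketch.)

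The natural such structure is the \emph{tree of stacks}. I would encode a stack content $\sigma=\gamma_1\gamma_2\cdots\gamma_n\bot$, with $\gamma_1$ on top, as the node obtained by reading it from the bottom, \ie the node $\bot\gamma_n\cdots\gamma_2\gamma_1$ of the regular tree $T$ whose nodes are the well-formed stacks and in which the children of a node $w$ are the $w\gamma$ for $\gamma\in\Gamma\setminus\{\bot\}$. This $T$ is a regular tree of degree $|\Gamma|-1$; as such it is MSO-interpretable in, and has the same MSO theory as, the full binary tree, so its MSO theory is decidable by Rabin's theorem. To account for the control state I would take $|Q|$ disjoint copies of $T$, the $q$-th copy carrying the configurations whose state is $q$: since $|Q|$ is finite this is exactly the finite copying an MSO-interpretation permits.

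The key observation is that every transition is a \emph{bounded-distance move} in $T$. Indeed a transition $\delta(q,\gamma,a)=(q',u)$ with $u\in\Gamma^{\leq 2}$ turns $\sigma=\gamma\sigma''$ into $\sigma'=u\sigma''$, and in the encoding above this amounts to moving from the node for $\sigma$ up to its parent (the node for $\sigma''$) and then descending by $|u|\leq 2$ steps to the node for $\sigma'$: a pop is a single upward step, a top rewriting is one step up and one down, and a push is one step up and two down. Such moves, together with the reading of the relevant top symbols (encoded as node labels), are first-order, hence MSO, definable from the parent/child relations of $T$. Consequently, for each letter $a\in A$ the relation $\fl{a}$ between the appropriate copies of $T$ is MSO-definable, and the initial and final markers are definable as well. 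This yields the desired MSO-interpretation of $G_{\mathcal{P}}$ in $T$, whence decidability of its MSO theory follows.

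The genuine content sits in two places. The heavy external input is Rabin's theorem, which I would invoke as a black box. The part requiring care — though ultimately routine — is arranging the encoding so that each pushdown rewriting becomes a bounded tree move; the only delicate points are fixing the orientation of the stack inside the tree and the special status of the non-removable, non-pushable bottom symbol $\bot$, which I would handle by making $\bot$ the root of $T$ and never a child. I expect the verification that the $\fl{a}$ relations are MSO-definable to be the main (but surmountable) obstacle, everything else reducing to the cited decidability of the tree.
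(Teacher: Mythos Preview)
The paper does not give its own proof of this theorem; it is quoted as a classical result and attributed to Muller and Schupp, with no argument supplied. Your sketch via an MSO-interpretation of $G_{\mathcal{P}}$ in the $(|\Gamma|{-}1)$-ary tree of stacks (with $|Q|$-fold copying for the control state) together with Rabin's theorem is a correct and by now standard route to the result; the bounded-displacement observation for pop/rewrite/push and the handling of $\bot$ at the root are exactly what is needed. As you already note, this is not Muller and Schupp's original argument, which instead goes through their combinatorial characterisation of context-free graphs as the connected bounded-degree graphs with only finitely many end-isomorphism types. The interpretation approach you chose has the virtue of generalising cleanly (to prefix-recognisable graphs and the Caucal hierarchy), whereas the end-type approach yields an intrinsic, automaton-free description of the class; since the paper only needs the decidability statement as a black box, either route is adequate here.
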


\section{Main Results}\label{sec:main}

We are now ready to state our first result.

\begin{theorem}\label{theo:full}
There is a pushdown automaton $\mathcal{P}$ such that the graph $\wG_{\mathcal{P}}$ has an undecidable MSO theory.
\end{theorem}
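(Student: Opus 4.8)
The plan is to exhibit a pushdown automaton $\mathcal{P}$ whose marked graph $\wG_{\mathcal{P}}$ admits an MSO-interpretation of a structure already known to have an undecidable MSO theory, and then invoke Theorem~\ref{theo:MSOpushdown} for the contrast. A convenient target is the complete binary tree equipped with its \emph{equal-level} predicate: it is a classical fact that the complete binary tree together with the predicate $E(x,y)$ that holds iff $x$ and $y$ lie at the same distance from the root has an undecidable MSO theory, since one can interpret in it the infinite grid, over which the domino (tiling) problem is MSO-expressible. As $G_{\mathcal{P}}$ has a decidable MSO theory while $\wG_{\mathcal{P}}$ differs from it only by the two-colouring of edges induced by the rank function $rk$, producing such an interpretation inside $\wG_{\mathcal{P}}$ immediately proves the theorem.

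First I would design $\mathcal{P}$ so that $G_{\mathcal{P}}$ contains the complete binary tree $T_2$ as its tree of stacks: configurations $(\qini,w\bot)$ with $w\in\{a,b\}^*$, the push transitions giving the two child edges, and the rank arranged so that $rk((\qini,w\bot))$ is an affine function of the depth $|w|$. As in Example~\ref{ex:pda}, the shortest route to a final configuration is to empty the stack, so that rank coincides with stack height up to an additive constant and the marked edges $\underline{a}$ are exactly those that pop a symbol, i.e. the tree edges oriented toward the root. On this part of the graph the predicate $E$ is precisely equality of ranks, so the whole problem reduces to making the relation ``$rk(x)=rk(y)$'' MSO-definable in $\wG_{\mathcal{P}}$.

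The central difficulty, and the reason a naive tree does not suffice, is exactly this definability: on a bare tree the marking is structurally forced (every parent edge decreases the rank), so it carries no information that MSO could not already recover, and equal level is provably \emph{not} MSO-definable from the tree alone. The construction must therefore add further transitions, while keeping $G_{\mathcal{P}}$ a genuine pushdown graph, so that the marking of these extra edges encodes rank comparisons. The mechanism I would use is a cancellation gadget in the spirit of the reduction performed by the state $\qs$ of Example~\ref{ex:pda}: from (copies of) two tree nodes $u$ and $v$ one reaches a configuration whose stack interleaves the two words and whose shortest path to a final configuration cancels them symbol by symbol. The two descending marked paths emanating from $u$ and $v$ can then be made to collide at a common configuration \emph{strictly before} a final configuration is reached if and only if $|u|=|v|$; since ``reaching a common vertex along marked edges'' is an MSO reachability property, equal rank, and hence $E$, becomes expressible.

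Finally I would assemble the pieces: define in MSO the domain of tree nodes, the two successor relations (as the unmarked push edges), and the predicate $E$ (through the gadget above), thereby exhibiting an MSO-interpretation of $(T_2,E)$ inside $\wG_{\mathcal{P}}$, and transfer undecidability back. I expect the genuinely delicate step to be the correctness of the cancellation gadget: one must verify that the rank function behaves as intended on the auxiliary configurations, so that the canonical shortest path really is the cancelling one and no shorter path sneaks in, and that the collision point is reached along marked edges exactly under the equal-depth condition. This is where the combinatorics of the rank, rather than the logical interpretation, carries the weight of the argument.
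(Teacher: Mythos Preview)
Your overall strategy (interpret a structure with undecidable MSO theory inside $\wG_{\mathcal{P}}$) is sound, and you correctly identify the crux: the marking on the tree part is forced and carries no new information, so all the work must be done by the auxiliary gadget.  The gap is precisely there.  The gadget you sketch---``from (copies of) two tree nodes $u$ and $v$ one reaches a configuration whose stack interleaves the two words''---is not a construction but a wish.  A pushdown automaton has a single stack and a single current configuration; there is no mechanism by which two \emph{independent} vertices $x=(\qini,u\bot)$ and $y=(\qini,v\bot)$ give rise to a third configuration that simultaneously records $u$ and $v$.  Any path in $G_{\mathcal{P}}$ starting at $x$ that later ``records $v$'' has to push $v$ letter by letter, and nothing forces those letters to agree with the particular $v$ sitting at $y$.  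Consequently the MSO formula you would write with free variables $x,y$ has no handle on a joint configuration, and the collision criterion (``marked paths from $x$ and $y$ meet strictly before the final vertex'') fails for the same reason you already noted on the bare tree: in any construction where the shortest route empties the stack, the marked paths from $x$ and from $y$ both funnel through the same bottom configurations regardless of $|u|$ versus $|v|$.

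The paper sidesteps this obstacle by never comparing two vertices.  It encodes a \emph{pair} of counters in a \emph{single} configuration: the stack alphabet is $\{1,2,3\}^2\times\{2\}$, a stack determines a triple $(k_1,k_2,k_3)$ of column sums, and the counters are $k_1-k_3$ and $k_2-k_3$.  Three popping modes $p_1,p_2,p_3$ empty the stack in $k_i$ steps respectively, so in $\wG_{\mathcal{P}}$ the marked switch edges at a single vertex reveal which $k_i$ is minimal; after padding $k_2$ by pushing $(2,3,2)$ repeatedly, ``$k_1=k_3$'' becomes a unary MSO property of that vertex.  One then simulates a $2$-counter machine along a single path, the zero tests being these unary properties, and reduces the halting problem to the MSO theory of $\wG_{\mathcal{P}}$.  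The conceptual difference with your plan is exactly unary versus binary: packing both numbers into one vertex turns the comparison you could not express into a local observation about outgoing marked edges.
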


\begin{proof}
The idea is to design a pushdown automaton $\mathcal{P}$ such that, for any 2-counter machine with tests for $0$, one can build an MSO formula that is true in $\wG_{\mathcal{P}}$ if and only if the 2-counter machine halts from its initial configuration. As the halting problem is undecidable for 2-counter machine with tests for $0$, it follows that $\wG_{\mathcal{P}}$ has an undecidable MSO theory.

We define a pushdown automaton over the stack alphabet $\Gamma=(\{1,2,3\}\times\{1,2,3\}\times\{2\})\cup\{\bot\}$: with any element over $\Gamma^*\bot$ one can associate a triple $(k_1,k_2,k_3)$ where $k_i$ is obtained by taking the sum along the $i$-th component of the elements in the stack (ignoring $\bot$). With such a triple $(k_1,k_2,k_3)$ we can associate a pair of counters $(k_1-k_3,k_2-k_3)\in\mathbb{Z}\times\mathbb{Z}$. Of course several stack contents may encode the same counters values but this is not problematic later on.

The pushdown automaton $\mathcal{P}$ works in two modes (the mode being indicated by the control state; we omit the states in this description to help readability). In the first mode, one can push any symbol {in $\{1,2,3\}\times\{1,2,3\}\times\{2\}$} (namely there is a dedicated input letter for any symbol to be pushed), which allows to increment/decrement/keep unchanged the two counters: \emph{e.g.} to increment counter $k_1-k_3$ and decrement the counter $k_2-k_3$ one pushes $(3,1,2)$; more generally to modify the counter $k_1-k_3$ by $\iota_1$ and to modify the counter $k_2-k_3$ by $\iota_2$ one pushes $(2+\iota_1,2+\iota_2,2)$. In the first mode, one can switch to a second mode (thanks to a special input letter) that comes with three variants depending on the control state (call those states $p_1,p_2$ or $p_3$). In the state $p_i$ one pops on the stack and the “popping speed” depends on the $i$-th component of the current top symbol: if the top stack symbol {has $1$ as its $i$-th component, one simply pops, if it has $x>1$ one goes first to $x-1$ intermediate states before popping}. Hence, in a configuration with associated triple $(k_1,k_2,k_3)$ it takes $k_i$ steps before emptying the stack starting in the state $p_i$. Once the stack is emptied, the (unique) final state is reached (that is from a configuration with top symbol $\bot$ there is only one possible transition that goes to the final state and this is the only way to reach it).

{Now, consider the graph $\wG_{\mathcal{P}}$ and let us explain how the extra information carried by this graph permits to compare $k_1,k_2$ and $k_3$. This in turn will allow us to test if  the counters $k_1-k_3$ and $k_2-k_3$ are equal to $0$ when simulating a 2-counter machine.
In the following, we refer to an edge in $\wG_{\mathcal{P}}$ as being marked if it corresponds to an underlined input symbol.}

Consider a configuration (in the first mode) with an associated triple $(k_1,k_2,k_3)$ and assume that one wants to check whether $k_1=k_3$. For this one looks at the marked outgoing edges in the present configuration: if there is one that goes to a configuration (in the second mode) with the state $p_1$ and another one that goes to a configuration with the state $p_3$ (remember that the marked edges indicate the fastest choices to the final configuration) one directly concludes that $k_1=k_3$, hence that $k_1-k_3=0$; however one may have $k_1=k_3$ without being in the previous situation namely if $k_2<k_1,k_3$. In this latter case the trick is to increase all three components: the $k_1$ and the $k_3$ components by the same value and the $k_2$ component by a much bigger value, leading to a configuration $(k_1',k_2',k_3')$ with $k_2'>k_1',k_3'$ and $k_1'=k_3'$ if and only if $k_1=k_3$. The latter can easily be achieved by performing a sequence of actions pushing $(2,3,2)$. Hence, if one wants to check whether $k_1=k_3$ it suffices to check whether the following holds: “either the marked edges indicate to go both to $p_1$ and $p_3$ or there exists a path along which only pushing $(2,3,2)$ are performed and that leads to a configuration where the marked edges indicate to go both to $p_1$ and $p_3$”. As the latter can easily be stated in MSO logic it follows that one can write an MSO formula (with one order-1 free variable) that holds exactly in those configurations (in the first mode) where $k_1=k_3$. Similarly, one can design a formula to check whether $k_2=k_3$.

Now for any 2-counter machine with tests for $0$ one can easily write an MSO formula that builds on the previous formulas and checks whether there is a run of the machine that is halting: the states of the machine are directly handled in the MSO formula while the counter is managed thanks to the underlying structure of $\wG_{\mathcal{P}}$.

As the halting problem for 2-counter machine with tests for $0$ is undecidable, it concludes the proof.
\end{proof}

We now state our second result which is an analog of Theorem~\ref{theo:full} but for well-formed marking of $G_{\mathcal{P}}$ (hence, it encompasses Theorem~\ref{theo:full} as $\wG_{\mathcal{P}}$ is a well-formed marking of $G_{\mathcal{P}}$).

\begin{theorem}\label{theo:sub}
There is a pushdown automaton $\mathcal{P}$ such that any well-formed marking of $G_{\mathcal{P}}$ has an undecidable MSO theory.
\end{theorem}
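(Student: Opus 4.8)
The plan is to follow exactly the reduction strategy of Theorem~\ref{theo:full}: encode a $2$-counter machine with zero-tests, the counters being the pair $(k_1-k_3,k_2-k_3)$ carried by the stack, with machine states and transitions handled inside the formula, increments and decrements realised by the pushes of the first mode, and the only delicate point being to express the zero-tests $k_1=k_3$ and $k_2=k_3$ in MSO. The new difficulty is that a well-formed marking only guarantees that \emph{marked edges are rank-decreasing} and that \emph{every configuration of $W$ carries at least one marked outgoing edge}; it may mark only one edge among several equally fast ones. Hence the test used for $\wG_{\mathcal{P}}$ -- ``the marked outgoing edges lead both to $p_1$ and to $p_3$'' -- remains sound but is no longer complete: when $k_1=k_3$ is the common minimal value, the marking is free to mark only the edge towards $p_1$, thereby hiding the tie. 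Overcoming this adversarial freedom is the main obstacle, and everything else is as in Theorem~\ref{theo:full}.

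The lever is the following elementary observation: if a configuration of $W$ has a \emph{unique} rank-decreasing outgoing edge, then this edge is marked in \emph{every} well-formed marking, as it is the only possible witness of the ``at least one marked outgoing edge'' condition; dually, any edge that is not rank-decreasing stays unmarked. I would therefore redesign $\mathcal{P}$ so that each comparison is read off at a configuration whose fastest successor is \emph{always unique}, enforced by a parity argument. Concretely, for the first counter I add two second-mode \emph{comparison gadgets}, reachable from a first-mode configuration by two dedicated input letters. Inside each gadget the stack can be emptied towards the final configuration in exactly two ways: a \emph{counter-paced} emptying that spends $2k_1$ steps (an even number, obtained by popping each symbol in twice as many steps as its first component), and a \emph{reference-paced} emptying that ignores the stack symbols and spends a fixed \emph{odd} number of steps, namely $2k_3+1$ in the first gadget and $2k_3-1$ in the second one (recall $k_3=2h$ is twice the stack height; the $\pm 1$ offsets are produced by one extra, resp.\ one missing, idle state, and the common entering/final constants cancel in the comparison). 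Since a counter-paced length is even and a reference-paced length is odd, the two competing paths never have equal length; the shortest one is unique, so in every well-formed marking the edge starting it is forced to be marked and the other is forced to be unmarked.

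Reading the marking then yields a robust zero-test. In the first gadget the counter-paced path is the strictly shorter one if and only if $2k_1<2k_3+1$, that is $k_1\le k_3$; hence the counter-paced outgoing edge is marked if and only if $k_1\le k_3$. In the second gadget the reference-paced path is the shorter one if and only if $2k_3-1<2k_1$, that is $k_3\le k_1$; hence the reference-paced outgoing edge is marked if and only if $k_1\ge k_3$. Both equivalences hold in \emph{every} well-formed marking precisely because each minimum is unique, so there is no tie for the adversary to exploit. Their conjunction holds if and only if $k_1=k_3$, and it is plainly MSO-definable (follow the dedicated letter to each gadget and test whether the relevant outgoing edge is marked). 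A symmetric pair of gadgets paced on the second component gives a robust test for $k_2=k_3$. Plugging these two robust zero-tests into the simulation of Theorem~\ref{theo:full} produces, for any $2$-counter machine, an MSO sentence that holds in the marked graph if and only if the machine halts, and this uniformly over every well-formed marking of $G_{\mathcal{P}}$. Undecidability of the halting problem then gives the statement, the whole difficulty having been concentrated in the parity design that makes every relevant minimum unique and thereby forces the marking.
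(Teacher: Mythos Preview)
Your proposal is correct and shares the paper's key insight --- use a parity argument to force the relevant rank-decreasing edge to be unique, so that the well-formed marking has no freedom --- but implements it differently. The paper does \emph{not} modify the automaton of Theorem~\ref{theo:full}; instead it restricts the simulation to configurations where $k_1,k_2,k_3$ are all even (each counter update is realised by two identical pushes), and when a tie might hide $k_1=k_3$ it probes a neighbouring \emph{non-even} configuration obtained by a single extra push of $(3,3,2)$ or $(1,3,2)$: there the minimum among the three values becomes strict, so the marked edge is forced, and combining ``$p_1$ is marked here'' with ``$p_3$ is marked one odd push away'' pins down $k_1=k_3$ via the evenness gap. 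Your approach instead enlarges the automaton with dedicated comparison gadgets in which the two competing path lengths have opposite parities by construction, so the minimum is always strict and the marking is forced at the gadget entry itself. Your version trades a heavier automaton for a lighter MSO formula and makes the ``unique minimum'' mechanism very explicit; the paper's version shows that the automaton of Theorem~\ref{theo:full} already suffices, pushing all the parity bookkeeping into the formula. One small wrinkle in your sketch: the path of length $2k_3-1$ requires stack height $h\ge 1$, so you should arrange the simulation never to test at the empty stack (e.g.\ by an initial dummy push of $(2,2,2)$).
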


\begin{proof}
The proof used to establish Theorem~\ref{theo:full} no longer works because \emph{e.g.} we could have $k_1=k_3$ but only the edge leading to the configuration with the state $p_1$ is marked. However, the same pushdown automaton is working but an extra trick is required.

The trick is that we will only be interested in triples $(k_1,k_2,k_3)$ where $k_1,k_2$ and $k_3$ are even (and these can be easily filtered out by an MSO sentence, \emph{e.g.} considering the path from the initial configuration corresponding to $(0,0,0)$ and performing a modulo $2$ computation). We call these configurations \defin{even} configurations. With an even configuration and the corresponding triple  $(k_1,k_2,k_3)$ we can associate a pair of counters $(\frac{k_1-k_3}{2},\frac{k_2-k_3}{2})\in\mathbb{Z}\times\mathbb{Z}$.  We take the same pushdown automaton as previously but now in the first mode to simulate changes on the counter one must do two identical successive transitions: \emph{e.g.} to increment counter $k_1-k_3$ and decrement the counter $k_2-k_3$ one pushes $(3,1,2)$ twice.

Fix a well-formed marking $H$ of $G_{\mathcal{P}}$. 
Assume we are in an even configuration with an associated triple $(k_1,k_2,k_3)$ and that we want to check whether $k_1=k_3$. For this one looks at the marked outgoing edges in the present configurations: obviously, if both the one going to the state $p_1$ and the one going to the state $p_3$ are marked, one directly concludes that $k_1=k_3$, hence that $k_1-k_3=0$; however one may have $k_1=k_3$ without being in the previous situation namely if $k_2<k_1,k_3$ (as in the proof of Theorem~\ref{theo:full}) or if $k_2>k_1,k_3$, $k_1=k_3$ but only the edge to $p_1$ is marked (or symmetrically the one to $p_3$). We handle the case where $k_2<k_1,k_3$ as previously, \emph{i.e.} by looking at a path (of even length) along which only pushing $(2,3,2)$ are performed and that leads to a configuration with some property. The latter property is either that both the outgoing edge to $p_1$ and to $p_3$ are marked or that only the outgoing edge to $p_1$ is marked (\emph{resp.} $p_3$) and in the non-even configuration obtained by pushing {$(3,3,2)$} (\emph{resp.} $(1,3,2)$) the edge to $p_3$ (\emph{resp.} $p_1$) is marked: this means that $k_1\leq k_3$ (\emph{resp.} $k_3\leq k_1$) and that {$k_1+3\geq k_3+2$} (\emph{resp.} $k_3+2\geq k_1+1$) hence that $k_1=k_3$ as both are even. 

As the existence of a path (of even length) along which only pushing $(2,3,2)$ are performed and that leads to a configuration with the previous property, can easily be expressed as an MSO formula, and as being an even configuration can also be stated in MSO, it follows that one can write an MSO formula (with one order-$1$- free variable) that holds exactly in those even configurations (in the first mode) where $k_1=k_3$. Similarly, one can design a formula to check whether $k_2=k_3$. 

Then, as in the proof of Theorem~\ref{theo:full}, it follows that, for any 2-counter machine with tests for $0$, one can build an MSO formula that is true in $H$ if and only if the 2-counter machine halts from its initial configuration. As the halting problem is undecidable for 2-counter machine with tests for $0$, it follows that $H$ has an undecidable MSO theory.
\end{proof}

\section{Consequences}\label{ref:consequences}

We now briefly mention some implications of our results. 
As we know (by Theorem~\ref{theo:MSOpushdown}) that any pushdown graph has a decidable MSO theory, one directly gets the following.

\begin{corollary}
There is a pushdown automaton $\mathcal{P}$ such that no well-formed marking of it is a pushdown automaton.
\end{corollary}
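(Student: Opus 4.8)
The plan is to construct a single pushdown automaton $\mathcal{P}$ whose marked graph $\wG_{\mathcal{P}}$ encodes the behaviour of an arbitrary $2$-counter machine, and then to translate the halting problem for such machines into an MSO formula over $\wG_{\mathcal{P}}$. First I would fix a stack alphabet that lets a stack content represent two integer counters simultaneously: for instance, using triples $(k_1,k_2,k_3)$ extracted componentwise from the stack, with the counter pair taken to be $(k_1-k_3,k_2-k_3)$. This redundant encoding is deliberate, since the key idea is to make the \emph{length} of a shortest accepting path reveal arithmetic comparisons between the $k_i$: the automaton should have a ``drain'' mode in which the time to empty the stack and reach the unique final configuration equals $k_i$ when one starts draining according to the $i$-th component. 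Because the marking in $\wG_{\mathcal{P}}$ selects exactly the rank-decreasing (shortest-path) edges, the marked outgoing edges at a configuration then expose which of the $k_i$ are minimal, and in particular whether $k_1=k_3$ or $k_2=k_3$, i.e.\ whether each counter is zero.

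Next I would verify that the predicate ``this configuration is in the first (pushing) mode with $k_1=k_3$'' is MSO-definable over $\wG_{\mathcal{P}}$. The subtle point is that equality $k_1=k_3$ is only directly visible from the marked edges when $k_1,k_3$ are the strict minimum; when some other component is smaller, the shortest-path edge need not point to both $p_1$ and $p_3$. I would handle this by an auxiliary ``inflation'' gadget: a dedicated push symbol (say $(2,3,2)$) that raises the third component above the others while preserving the sign of $k_1-k_3$, so that reachability of a configuration where the marked edges simultaneously indicate $p_1$ and $p_3$ (along a path that only performs this push) characterises $k_1=k_3$. Since reachability along edges of a fixed label is MSO-expressible (indeed even transitive-closure under a single edge relation is MSO-definable using second-order quantification over connecting sets), the whole test becomes an MSO formula with one free first-order variable. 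I would do the analogous construction for $k_2=k_3$.

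With the two zero-tests in hand, the remaining step is routine MSO engineering: the finitely many control states of the $2$-counter machine are tracked by set variables or by the first-order position itself, transitions increment/decrement the counters (which corresponds to following the appropriate fixed push edges, modifying $k_1-k_3$ by $\iota_1$ and $k_2-k_3$ by $\iota_2$ via pushing $(2+\iota_1,2+\iota_2,2)$), and conditional branches consult the zero-test formulas. One writes a formula asserting the existence of a finite path in $\wG_{\mathcal{P}}$ that faithfully simulates a halting run from the initial configuration encoding $(0,0)$. This formula holds in $\wG_{\mathcal{P}}$ if and only if the machine halts. The main obstacle, and the part deserving the most care, is precisely the definability of the zero-tests: one must argue that the marking faithfully reflects the length comparison in \emph{all} sign configurations of the three components, which is why the inflation gadget is needed to reduce every case to the strict-minimum case. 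Everything else reduces to the undecidability of the $2$-counter halting problem together with Theorem~\ref{theo:MSOpushdown}, which guarantees that no genuine pushdown graph can exhibit such an undecidable MSO theory.
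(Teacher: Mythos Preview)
Your argument establishes that $\wG_{\mathcal{P}}$ has an undecidable MSO theory, which is exactly Theorem~\ref{theo:full}. But the corollary asks for more: it says that \emph{no} well-formed marking of $G_{\mathcal{P}}$ is a pushdown graph, so you must show that \emph{every} well-formed marking $H$ of $G_{\mathcal{P}}$ has an undecidable MSO theory (this is the content of Theorem~\ref{theo:sub}, from which the corollary is immediate via Theorem~\ref{theo:MSOpushdown}). Your proposal only treats the single marking $\wG_{\mathcal{P}}$ and therefore does not yet yield the corollary.

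The gap is precisely in your zero-test. In $\wG_{\mathcal{P}}$ \emph{all} rank-decreasing edges are marked, so when $k_1=k_3$ are jointly minimal the edges to both $p_1$ and $p_3$ are marked, and your inflation gadget reduces every case to that one. In an arbitrary well-formed marking, however, a configuration in $W$ is only guaranteed \emph{at least one} marked outgoing edge; it may well happen that $k_1=k_3<k_2$ and yet only the edge to $p_1$ is marked. Then the predicate ``both $p_1$ and $p_3$ are marked, possibly after inflating $k_2$'' no longer characterises $k_1=k_3$, and your MSO zero-test breaks down. The paper repairs this with an additional parity trick: one restricts attention to configurations where all $k_i$ are even (this is MSO-definable from the initial configuration), and uses an auxiliary push of an \emph{odd} amount (e.g.\ $(3,3,2)$ or $(1,3,2)$) to probe the ordering; the combination of the marked edge at the even configuration and at the odd probe pins down $k_1=k_3$ regardless of which single edge the marking chose. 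Without some device of this kind, the argument does not extend from $\wG_{\mathcal{P}}$ to arbitrary well-formed markings.
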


A generalisation of reachability properties are given by the notion of pushdown reachability games. In the setting of pushdown graph (that for simplicity we assume without dead-end), one considers a partition of the control states of the underlying pushdown automaton {between} two players: Eve and Adam. A play consists in moving a pebble starting from an initial configuration as follows: at any stage of the game the player owning (the control state of) the current configuration chooses a successor and moves the pebble to it and so on forever. The play is won by Eve if the pebble eventually reaches a final configuration. A strategy for a player is a function mapping every prefix of plays to a valid move; a player respects a strategy along a play if he systematically plays the move indicated by the strategy; and a strategy is winning for a player from an initial vertex if the player wins any play starting from that vertex when he respects the strategy. 

It is a classical result that the winning region for Eve (\ie the set of configurations from which she has a winning strategy) can be defined in a fixpoint manner leading to an object called \defin{attractor} (see \eg \cite{Zie98}): this construction is a direct adaptation of the definition (in Section~\ref{ssec:pushdown}) of the sequence $(W_i)_{i\geq 0}$ to the alternating setting. In particular, if Eve owns all the control states, the attractor coincides with $W$. Of course, the attractor constructions also come with a corresponding (positional) strategy (namely a strategy that only depend on the current configuration and that ensures to get closer to a final configuration). Hence, one has the following immediate consequence of Theorem~\ref{theo:full} and Theorem~\ref{theo:sub}.

\begin{corollary}
There is a pushdown reachability game such that any marking of the underlying pushdown graph by an attractor strategy (or any sub-strategy of it) leads to a graph with an undecidable MSO theory.
\end{corollary}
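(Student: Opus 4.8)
The plan is to reduce the game statement to the one-player case and then invoke Theorem~\ref{theo:sub} directly. Concretely, I would take the very pushdown automaton $\mathcal{P}$ constructed in the proof of Theorem~\ref{theo:sub}, and turn $G_{\mathcal{P}}$ into a pushdown reachability game by assigning \emph{all} its control states to Eve (ensuring, as in Example~\ref{ex:pda}, that the final configuration carries a self-loop so that the game has no dead-end, matching the standing assumption). The point of putting every state under Eve's control is the remark already recorded in the excerpt: when Eve owns all control states the attractor coincides with $W$, and the associated positional attractor strategy is exactly a choice, at each configuration of $W$, of an outgoing edge that strictly decreases the rank.

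First I would make precise the correspondence between attractor strategies and markings. Marking $G_{\mathcal{P}}$ by a positional attractor strategy $\tau$ means underlining, for each configuration $c\in W$, the single edge $c\fl{a}\tau(c)$ prescribed by $\tau$, and leaving every other edge plain. Since $\tau(c)$ has rank strictly smaller than that of $c$, each underlined edge is rank-decreasing, hence one of the edges that $\wG_{\mathcal{P}}$ underlines; every configuration of $W$ receives exactly one marked outgoing edge; and all remaining edges keep their plain label. Comparing with the definition in Section~\ref{ssec:pushdown}, this is precisely a well-formed marking of $G_{\mathcal{P}}$ --- indeed the extreme ``spanning tree'' case mentioned there.

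Next I would treat the two flavours named in the statement. A full attractor strategy that retains \emph{all} optimal (rank-decreasing) moves marks every rank-decreasing edge, which is exactly the graph $\wG_{\mathcal{P}}$ of Theorem~\ref{theo:full}; any \emph{sub}-strategy retains only a subset of these moves, but as it must still be winning from every configuration of the attractor it still marks at least one outgoing edge per configuration of $W$. In both cases the resulting graph satisfies all the clauses of a well-formed marking, so Theorem~\ref{theo:sub} applies verbatim: any well-formed marking of this particular $G_{\mathcal{P}}$ has an undecidable MSO theory, and hence so does any graph obtained by marking $G_{\mathcal{P}}$ with an attractor strategy or any sub-strategy of it.

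The only genuinely load-bearing step --- which I expect to be the main point to state cleanly rather than to prove --- is the verification that attractor strategies and their sub-strategies land inside the class of well-formed markings, in particular the clause requiring at least one marked outgoing edge at every configuration of $W$. This is where the single-player reduction pays off: because Eve owns all states, ``getting closer to a final configuration'' is literally ``decreasing the rank'', a winning (sub-)strategy cannot get stuck inside $W$, and so it must prescribe at least one rank-decreasing move from each such configuration, exactly matching the last clause of the well-formed-marking definition.
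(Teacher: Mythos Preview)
Your proposal is correct and follows essentially the same route as the paper, which presents the corollary as an immediate consequence of Theorems~\ref{theo:full} and~\ref{theo:sub} after observing that when Eve owns all control states the attractor coincides with $W$ and the associated attractor strategies are exactly rank-decreasing choices. You have simply spelled out in detail the one-line reduction the paper leaves implicit: give all states to Eve, observe that any (sub-)attractor strategy yields a well-formed marking, and invoke Theorem~\ref{theo:sub}.
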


{This result is interesting because it implies that the constructions in \cite{Walukiewicz01,SerrePHD} for pushdown games build a winning strategy that is not an attractor strategy. Indeed, the constructed strategies can be used to define a well-formed marking of the input pushdown graph which remains a pushdown graph.}

\section*{Acknowledgements}

The authors would like to thank Didier Caucal for numerous helpful discussions and for proof-reading an early version of this manuscript. {They would also like to thanks the reviewers for all their valuable comments.}

\section*{\refname}

\newcommand{\noopsort}[1]{} \newcommand{\singleletter}[1]{#1}
  \newcommand{\etal}{et al.}

\end{document}